\documentclass[conference]{IEEEtran}
\IEEEoverridecommandlockouts
\usepackage{amssymb}
\usepackage{amsfonts}
\usepackage{amsmath}
\usepackage{amsthm}
\usepackage{blkarray}
\usepackage{blindtext}
\usepackage{graphicx}
\usepackage{booktabs}
\usepackage{dsfont}
\usepackage{balance}
\usepackage{color}
\usepackage{cite}
\usepackage{enumerate}
\usepackage{footnote}
\usepackage{verbatim}
\usepackage{stfloats}
\usepackage{mdframed}
\usepackage{multirow}
\usepackage{subfigure}
\usepackage{tablefootnote}
\usepackage{tabularx} 
\usepackage{textcomp}
\usepackage{float}
\usepackage{xr}
\usepackage{xcolor}
\usepackage[final]{hyperref} 
\hypersetup{hidelinks}
\usepackage[lined, boxed, linesnumbered, ruled]{algorithm2e}
\newtheorem{theorem}{Theorem}[section]
\newtheorem{corollary}[theorem]{Corollary}
\newtheorem{definition}[theorem]{Definition}

\newtheorem{proposition}[theorem]{Proposition}

\usepackage[utf8]{inputenc}

\title{Message-Level Scheduling for RLNC-Coded Multi-Source Traffic}

\author{
  \IEEEauthorblockN{Zhaohong Lu\IEEEauthorrefmark{1}, Qingyu Liu\IEEEauthorrefmark{2},Haibo Zeng\IEEEauthorrefmark{1}}
  \IEEEauthorblockA{\IEEEauthorrefmark{1} Dept. of Electrical and Computer Engineering, Virginia Tech, Blacksburg, VA 24060, USA\\
                    \IEEEauthorrefmark{2}Shenzhen Graduate School, School of Electric and Computer Engineering, Peking University, Shenzhen 518055, China\\
                    Emails: \{zhaohonglu,hbzeng\}@vt.edu, qy.liu@pku.edu.cn
                    }
}

\begin{document}

\maketitle

\begin{abstract}
This paper studies weighted decoding-delay minimization for multiple
RLNC-coded message streams that compete for finite processing capacity at
a destination. Packet arrivals are exogenous, while the scheduler only
determines the processing order of packets already available at the
destination. A trace-conditioned offline scheduling formulation shows
that a batch-release subclass is strongly NP-hard even with a single
processing unit. Message-Aware Innovation-Deficit Scheduling (MAIDS) is
then developed to prioritize each serviceable message according to its
weight and remaining decoding deficit. For a single processing unit,
MAIDS is shown to be exactly optimal under nonblocking progressive arrivals
with equal weights and under common activation with arbitrary positive
weights, while the unrestricted weighted online problem admits no universal
deterministic $O(1)$ competitive ratio. Simulation results on streaming and batch benchmarks show that MAIDS consistently reduces weighted decoding delay relative to the tested baselines,
remains close to the offline optimum on average, and recovers the predicted
exact performance boundaries.
\end{abstract}

\section{Introduction}
\label{sec:int3}

Vehicular ad hoc networks (VANETs) support direct vehicle-to-vehicle
and vehicle-to-infrastructure communication for safety, cooperative
driving, traffic efficiency, and infotainment
\cite{4539481,5948952}. Their high mobility, limited communication
range, and rapidly varying wireless links lead to frequent topology
changes and make stable end-to-end routes difficult to maintain
\cite{1364012}. Broadcast and opportunistic forwarding are therefore
natural dissemination mechanisms: information can propagate through
whichever relays are available, rather than relying on a persistent
source--destination path. These characteristics motivate multi-source
packet arrivals at a destination, because packets belonging to the same
message may be forwarded through different relays and reach the receiver
at different times.

Random linear network coding (RLNC) is well suited to such dissemination.
Instead of forwarding only native packets, a node transmits coded
combinations of packets from the same message generation, and a
destination can recover the message after accumulating the required
number of innovative degrees of freedom (DoFs)
\cite{ho2006rlnc,chachulski2007more}. Intermediate relays may also
recode received packets. Consequently, from the destination's
perspective, an upstream packet source may be either the original
message generator or a relay, and multiple upstream nodes may
simultaneously contribute coded packets for the same message. Different
sources may also serve different messages, producing overlapping
message streams at the receiver.

Our previous work studied the \emph{network-side} delay of
network-coding-enabled multi-source dissemination, including
propagation, relay queueing, path diversity, and congestion effects
\cite{lu2025delayanalysisrandomnetwork}. That analysis asks how coded
packets travel through the network and how network structure determines
delivery delay. The present work addresses the complementary problem
that begins \emph{after reception}: once coded packets from several
messages have reached the same destination, limited receiver processing
capacity can itself become a bottleneck. A destination may therefore
hold several unfinished messages simultaneously, even when the upstream
network has already delivered useful coded information for all of them.
The receiver must decide which available packets to process first in
order to reduce message-level decoding delay.

Accordingly, the problem is formulated at the message level, where
multiple RLNC-coded messages arrive over time at a destination with
finite processing capacity. In each slot, the scheduler determines how
the available processing capacity is allocated among coded packets that
have already arrived, with the objective of minimizing weighted decoding
delay from message activation to successful decoding. The upstream
transmission process is outside the scheduler's control: each source
serves one current message at a time, repeatedly transmitting coded
packets of that message and switching only after completing its assigned
transmissions, while different sources may contribute to the same or to
different messages. The scheduler cannot select transmitters, alter
future packet arrivals, or feed scheduling decisions back upstream.
Under the rank-unit abstraction used in this paper, each successfully
received coded packet contributes one innovative DoF until the message
reaches rank $K_m$. Thus, the scheduling state is naturally described at
the message level by its weight, remaining decoding deficit, and
currently available innovative work.

Although VANETs provide the primary motivating example, the theoretical
model is intentionally more general. Mobility, forwarding, link losses,
and source activity are summarized by an exogenous coded-packet arrival
trace at one receiver. The resulting scheduling problem therefore
applies to general RLNC-coded multi-source traffic satisfying the same
control boundary, rather than to VANETs alone.

This problem differs from both network-side coded-packet
scheduling and decoder-internal computation scheduling. Prior work has
optimized coding and transmission decisions before reception
\cite{eryilmaz2008delay,huang2015distributed,garcia2017lowdelay,
skevakis2019scheduling,wang2019prioritized,tasdemir2020platooning},
whereas our arrival trace is exogenous. Progressive multicore RLNC
decoding has also been studied by scheduling Gauss--Jordan operations
within a coded generation \cite{wunderlich2019progressive}; our
scheduler instead allocates processing service across multiple incomplete
messages. The closest classical connection is preemptive weighted
scheduling with release times. Ratio-based online policies and WSRPT
have been studied for weighted completion objectives
\cite{schulz2002alphapoints,batsyna2014online,xiong2012wsrpt}, while our
decoding-delay objective is a weighted-flow-time objective. This
distinction matters because unrestricted online weighted flow time
admits no deterministic $O(1)$ competitive ratio
\cite{bansal2009weightedflow}.

The main contributions of this paper are:
\begin{itemize}
    \item A general scheduling model is formulated for continuous
    multi-source RLNC traffic with exogenous packet arrivals, finite
    processing capacity, and weighted message decoding delay.

    \item A finite trace-conditioned offline benchmark is constructed,
    and a deterministic batch-release subclass is shown to be strongly
    NP-hard even with one processing unit.

    \item Message-Aware Innovation-Deficit Scheduling (MAIDS) is developed
    to prioritize serviceable messages by weight divided by remaining
    decoding deficit. For a single processing unit and nonblocking
    progressive arrivals, MAIDS is exactly optimal for equal weights via
    SRPT equivalence and for common activation with arbitrary positive
    weights via Smith's ratio rule. In contrast, the unrestricted
    weighted online problem has no deterministic $O(1)$ competitive
    guarantee.

\end{itemize}

The remainder of the paper is organized as follows.
Section~\ref{sec:system_model} introduces the receiver model and formulates the
optimization objective, after which Section~\ref{sec:offline} establishes the
trace-conditioned offline benchmark and its complexity. MAIDS and its
performance boundaries are developed in Section~\ref{sec:general_scheduler},
with numerical evaluation presented in Section~\ref{sec:simulation}.

\section{System Model and Problem Formulation}
\label{sec:system_model}

\subsection{Streaming System}

Consider a slotted-time system with one focal destination $v$
receiving RLNC-coded traffic from a fixed set of upstream sources
$\mathcal N=\{1,\ldots,N\}$. Multiple messages are delivered continuously,
and a source may be either an original generator or a relay. The same
message may therefore be supplied by more than one source, while different
sources may simultaneously carry different messages. Packets received at
the beginning of a slot are immediately available for processing in that
slot.

For each message, one or more upstream sources may be assigned to generate
or forward its coded packets. Once source $n$ is assigned to a message, it
repeatedly transmits coded packets for that message until its assigned
number of transmissions is completed. The source then becomes available
for reassignment to a subsequent message. The packet stream observed at
the destination is thus the superposition of transmissions from multiple
sources, whose message assignments may evolve over time.

Several source-side constraints govern this process. Each source is
assigned to at most one current message at a time, so transmissions of
successive messages do not interleave at an individual source. Different
sources, however, may concurrently contribute to the same message or to
different messages. Source assignment, transmission timing, forwarding,
and message rollover are all upstream processes and remain outside the
scheduler's control. Once coded packets reach the destination, the
scheduler therefore treats the resulting packet-arrival sequence as
exogenous; the analytical model does not require a particular stochastic
transmission law, while a concrete transmission process is introduced
later for the streaming simulations.

This separation allows the traffic seen by the scheduler to be described
directly in terms of message activations. Let $\mathcal M^{\rm new}(t)$
denote the set of messages whose first coded packet is observed at the
destination in slot $t$. For each message $m$, its activation time is
\begin{equation}
    a_m
    =
    \min\{t:m\in\mathcal M^{\rm new}(t)\}.
    \label{eq:message_activation_time}
\end{equation}
Once activated, message $m$ enters $\mathcal M(t)$, the set of active
not-yet-decoded messages, with generation size $K_m$ and positive weight
$w_m$, and remains there until decoding. Because new messages may be
activated before earlier ones complete, multiple unresolved messages can
coexist at the destination, including messages served successively by the
same upstream source.

\subsection{RLNC Arrivals and Receiver State}

The resulting packet arrivals are represented by a message-level decoding
state at the destination.

Each message $m$ contains $K_m$ source DoFs and requires $K_m$
innovative DoFs for decoding. To focus on scheduling rather than
finite-field dependence, a rank-unit abstraction is adopted: every
successfully received coded packet contributes one innovative DoF until
the message reaches rank $K_m$, after which additional packets provide
no further decoding progress. Coding is confined within a single
message generation.

Let $\mathcal A_m(t)$ denote the set of coded packets of message $m$
that become available at the destination at the beginning of slot $t$,
possibly aggregated from multiple upstream sources. Because the
scheduler does not influence upstream transmission or forwarding, the
arrival trace $\{\mathcal A_m(t)\}$ is treated as exogenous. It may
therefore reflect source activity, forwarding, mobility, lower-layer
access, wireless loss, and other reception effects without requiring
those mechanisms to be modeled explicitly.

For an active message $m$, let $r_m(t)$ be the number of innovative
DoFs already incorporated before slot-$t$ processing. The corresponding
remaining decoding deficit is
\begin{equation}
    d_m(t)=[K_m-r_m(t)]^+.
    \label{eq:rank_deficit}
\end{equation}
Let $\mathcal Q_m(t)$ denote the unprocessed packet buffer carried into
slot $t$. After the new arrivals are included, the available buffer is
\begin{equation}
    \mathcal Q_m^+(t)
    =
    \mathcal Q_m(t)\cup\mathcal A_m(t).
    \label{eq:buffer_after_arrival}
\end{equation}
Under the rank-unit abstraction, at most the remaining deficit can be
usefully processed. Hence the serviceable work of message $m$ in slot
$t$ is
\begin{equation}
    b_m(t)
    =
    \min\!\left\{d_m(t),\,|\mathcal Q_m^+(t)|\right\},
    \label{eq:buffered_innovation}
\end{equation}
with $0\leq b_m(t)\leq d_m(t)$.

\subsection{Receiver Processing and Control Boundary}
\label{sec:information_structure}

In each slot, the scheduler allocates finite processing capacity among
active messages. Let $C$ denote the maximum number of coded packets that can be
processed per slot, and let $y_m(t)\in\mathbb Z_{\geq0}$ be the number
of innovative packets of message $m$ processed in slot $t$. A feasible
allocation must satisfy
\begin{equation}
    \sum_{m\in\mathcal M(t)}y_m(t)\leq C,
    \label{eq:rx_processing_capacity}
\end{equation}
and
\begin{equation}
    0\leq y_m(t)\leq b_m(t),
    \qquad m\in\mathcal M(t).
    \label{eq:buffer_service_constraint}
\end{equation}
Processing $y_m(t)$ packets advances the incorporated rank by the same
amount, so the message state evolves according to
\begin{equation}
    r_m(t+1)=r_m(t)+y_m(t),
    \qquad
    d_m(t+1)=[d_m(t)-y_m(t)]^+.
    \label{eq:deficit_evolution}
\end{equation}
Thus $y_m(t)$ is the scheduler's control variable, while message
generation, source activity, transmitter identity, packet arrival times,
wireless reception, and the parameters $K_m$, $w_m$, and $C$ remain
outside its control.

\subsection{Message Decoding Delay and Objective}

The state and service dynamics above determine when each active message
can be decoded. Message $m$ completes in the first slot after which its
remaining decoding deficit becomes zero:
\begin{equation}
    T_m^{\rm dec}
    =
    \min\{t\geq a_m:d_m(t+1)=0\}.
    \label{eq:message_completion_slot}
\end{equation}
The corresponding decoding delay, measured from activation through the
completion slot, is
\begin{equation}
    \Delta_m=T_m^{\rm dec}-a_m+1.
    \label{eq:message_delay}
\end{equation}
For any finite set $\mathcal S$ of evaluated messages, the scheduling
objective is the total weighted decoding delay
\begin{equation}
    J(\mathcal S)
    =
    \sum_{m\in\mathcal S}w_m\Delta_m.
    \label{eq:weighted_delay_objective}
\end{equation}
The scheduler operates continuously and requires neither a terminal
batch nor knowledge of future arrivals.

Throughout the analysis, coded packets are assumed to be of fixed size,
receiver buffering is sufficient, and packet processing is deterministic
and homogeneous.
For steady-state evaluation, $\rho$ denotes the long-run average
innovative-packet workload offered to the receiver, and only the stable
regime $\rho/C<1$ is considered. Inter-message coding and joint
transmitter--receiver scheduling are outside the scope of this work.

\section{Trace-Conditioned Offline Benchmark and Complexity}
\label{sec:offline}

The operational system contains a continuing message stream, whereas
exact offline optimization necessarily considers a finite
trace-conditioned instance. Accordingly, a finite set
$\mathcal M_H$ of messages is selected from the stream together with
their realized activation times and packet-arrival traces, and the
horizon $H$ is chosen large enough for all messages in $\mathcal M_H$
to be feasibly completed. The offline receiver is omniscient with respect to
this finite trace, but the control boundary is unchanged:
upstream message generation and packet arrivals remain exogenous.

For each $m\in\mathcal M_H$, define the cumulative innovative rank
made available from its activation through slot $t$ as
\begin{equation}
    R_m^{\rm arr}(t)
    =
    \min\!\left\{
        K_m,
        \sum_{\tau=a_m}^{t}|\mathcal A_m(\tau)|
    \right\},
    \qquad t\geq a_m.
    \label{eq:cumulative_arrival_rank}
\end{equation}
Under the rank-unit abstraction, $R_m^{\rm arr}(t)$ is the number of
innovative DoFs that have become available at the destination by slot
$t$, capped at the decoding requirement $K_m$, and is independent of
the receiver's later processing order. The selected horizon $H$ is
assumed sufficiently long for every evaluated message to be decodable,
i.e.,
\begin{equation}
    R_m^{\rm arr}(H)\geq K_m,
    \qquad m\in\mathcal M_H.
    \label{eq:offline_decodability}
\end{equation}

\subsection{Offline Optimization Formulation}

For a newly activated message, the required receiver work is $K_m$
innovative packets. Let $y_m(t)\in\mathbb Z_{\geq0}$ denote the number of
innovative packets of message $m$ processed in slot $t$, and let
$z_{m,t}\in\{0,1\}$ designate slot $t$ as its completion slot. Define
\begin{equation}
    Y_m(t)
    \triangleq
    \sum_{\tau=a_m}^{t}y_m(\tau),
    \qquad
    Z_m(t)
    \triangleq
    \sum_{\tau=a_m}^{t}z_{m,\tau}.
    \label{eq:cumulative_service_completion}
\end{equation}
All message-specific constraints below hold for
$m\in\mathcal M_H$ and $t=a_m,\ldots,H$.

The complete finite trace-conditioned problem is
\begin{subequations}
\label{eq:offline_problem}
\begin{align}
\min_{\mathbf y,\mathbf z}\quad
&
\sum_{m\in\mathcal M_H}
\sum_{t=a_m}^{H}
w_m (t-a_m+1) z_{m,t}
\label{eq:offline_obj}
\\
\mathrm{s.t.}\quad
&
Y_m(t)
\leq
R_m^{\rm arr}(t),
\label{eq:arrival_causality}
\\
&
\sum_{\substack{m\in\mathcal M_H:\\ a_m\leq t}}
y_m(t)
\leq
C,
\qquad t=1,\ldots,H,
\label{eq:offline_processing_capacity}
\\
&
Y_m(H)
=
K_m,
\label{eq:offline_total_service}
\\
&
Z_m(H)
=
1,
\label{eq:completion_once}
\\
&
K_m Z_m(t)
\leq
Y_m(t),
\label{eq:completion_consistency}
\\
&
y_m(t)\in\mathbb Z_{\geq0},
\qquad
z_{m,t}\in\{0,1\}.
\label{eq:offline_domains}
\end{align}
\end{subequations}

Constraint~\eqref{eq:arrival_causality} enforces destination-side
information availability. Constraint~\eqref{eq:offline_processing_capacity}
limits aggregate receiver processing to $C$ coded packets per slot.
Constraint~\eqref{eq:offline_total_service} requires every considered
message to receive all $K_m$ decoding DoFs. Constraint
\eqref{eq:completion_once} assigns one designated completion slot, and
\eqref{eq:completion_consistency} prevents that slot from occurring
before all required DoFs have been processed. Because every $w_m>0$,
an optimal solution places $z_{m,t}=1$ at the earliest feasible
completion slot.

The decoding delay recovered from the completion indicator is
\begin{equation}
    \Delta_m
    =
    \sum_{t=a_m}^{H}
    (t-a_m+1)z_{m,t}.
    \label{eq:delay_from_completion}
\end{equation}
For a fixed packet trace, packet identities need not appear explicitly
in the mixed-integer program: under the rank-unit abstraction, the
cumulative-rank constraint \eqref{eq:arrival_causality} is sufficient
to prevent service from exceeding the innovative information
that has actually reached the destination.

This finite formulation is an exact benchmark for a selected stream
segment. It does not convert the operational model into a one-shot
batch system; MAIDS in Section~\ref{sec:general_scheduler} operates
directly on the dynamic active set $\mathcal M(t)$ without knowing a
terminal horizon.

\subsection{Computational Complexity under Exogenous Packet Availability}
\label{sec:complexity}

The main complexity result keeps the control boundary and the
time-varying packet availability of the general model, while removing
parallel processing and RLNC uncertainty. Consider the
following restricted destination-side availability pattern. For each
message $m$, choose an integer availability slot
$\tau_m^{\rm rel}$ and a packet set
$\mathcal P_m^{\rm rel}$ containing exactly $K_m$ innovative
coded packets. The value $\tau_m^{\rm rel}$ is determined by the
upstream packet-generation and delivery process; it is not a receiver
decision. Let
\begin{equation}
    \mathcal A_m(t)
    =
    \begin{cases}
        \mathcal P_m^{\rm rel},
        & t=\tau_m^{\rm rel},\\
        \emptyset,
        & \text{otherwise},
    \end{cases}
    \label{eq:batch_release_arrivals}
\end{equation}
set $r_m(a_m)=0$, and fix the receiver processing capacity to
\begin{equation}
    C=1.
    \label{eq:single_processing_unit}
\end{equation}
As a result, all innovative packets required by a message become available
together at the destination at an exogenous availability time, while
different messages may become available at different times. In this restricted construction, the message activation time is
$a_m=\tau_m^{\rm rel}$. The source generation time and the
destination-side availability time need not be the same; the latter is
the quantity relevant to scheduling. Once activated, a
message may be processed for several slots, interrupted while another
available message is processed, and resumed later.

This restricted problem is identical to the classical
preemptive single-machine problem
\begin{equation}
    1|r_j,\mathrm{pmtn}|\sum_j w_j C_j,
    \label{eq:preemptive_release_problem}
\end{equation}
which is strongly NP-hard for arbitrary job weights
\cite{schulz2002alphapoints}.

\begin{theorem}[Strong NP-hardness with one processing unit]
\label{thm:release_nphard}
The offline scheduling problem
\eqref{eq:offline_problem} is strongly NP-hard even when $C=1$, all arrivals are deterministic, the $K_m$ innovative packets required by
each message arrive together according to
\eqref{eq:batch_release_arrivals}, and processing is lossless and
deterministic.
\end{theorem}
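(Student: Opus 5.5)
The proof is a pseudo-polynomial reduction from the classical problem $1|r_j,\mathrm{pmtn}|\sum_j w_jC_j$ in \eqref{eq:preemptive_release_problem}, which is strongly NP-hard \cite{schulz2002alphapoints}. We embed an arbitrary instance into the restricted batch-release subclass of \eqref{eq:offline_problem} and check that optimal values correspond up to an additive constant.

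\textbf{Construction.} Take an instance with integer processing times $p_j$, release dates $r_j$, and weights $w_j$. First shift time so that every release date is at least $1$. For each job $j$ create a message $m_j$ with
\[
K_{m_j}=p_j,\qquad w_{m_j}=w_j,\qquad \tau_{m_j}^{\rm rel}=a_{m_j}=r_j+1 .
\]
Set $C=1$ and choose the horizon $H=\max_j r_j+\sum_j p_j+1$. With this $H$, \eqref{eq:offline_decodability} holds and every schedule that never idles while work is pending finishes within the horizon. All numbers are bounded by a polynomial in the magnitudes of the original data, so the map is a pseudo-polynomial transformation. This is exactly what strong NP-hardness needs: a strongly NP-hard source problem stays hard when its numbers are encoded in unary.

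\textbf{Schedule correspondence.} We show that feasible integer solutions $(\mathbf y,\mathbf z)$ of \eqref{eq:offline_problem} with $z$ placed at the earliest feasible slot correspond one-to-one with preemptive schedules in which preemptions occur only at integer times.
\begin{itemize}
\item Under \eqref{eq:batch_release_arrivals}, $R^{\rm arr}_{m}(t)=K_m$ for $t\ge a_m$. Constraint \eqref{eq:arrival_causality} therefore only says that job $m$ cannot be processed before its release.
\item Constraint \eqref{eq:offline_processing_capacity} with $C=1$ gives a single machine processing one unit per slot.
\item Constraint \eqref{eq:offline_total_service} says each job receives exactly $p_j$ units of processing.
\item Constraints \eqref{eq:completion_once}--\eqref{eq:completion_consistency} say that $z$ marks a slot no earlier than the completion of that processing.
\end{itemize}
Since $w_m>0$, optimality places $z$ at the true completion slot, so $C_j=T^{\rm dec}_{m_j}$ in the shifted time axis. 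The objective then equals
\[
\sum_j w_j\bigl(C_j - a_{m_j}+1\bigr)=\sum_j w_jC_j-\sum_j w_j r_j .
\]
The subtracted term is an instance constant, so weighted decoding delay (a flow-time objective) and weighted completion time have the same minimizers.

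\textbf{Main obstacle: integrality of preemptions.} The classical problem allows preemption at arbitrary real times, while our model uses slotted integer service $y_m(t)\in\mathbb Z_{\ge0}$. I would close this gap with a standard exchange argument. With integer data, all release dates and processing times are integers. Between consecutive event times (releases and completions), reorder the fractional pieces of processing so that jobs are served in contiguous blocks, for example by following the optimal order of completions. This can only decrease completion times, and induction over events shows that some optimal schedule preempts only at integer times.

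An alternative is to cite that the strong NP-hardness proof of \cite{schulz2002alphapoints} (originally due to Labetoulle, Lawler, Lenstra and Rinnooy Kan) already produces an optimal schedule with integer preemption points. Once this step is established, the reduction preserves both the decision threshold and the unary encoding size. Hence \eqref{eq:offline_problem} is strongly NP-hard even in the stated restricted case.
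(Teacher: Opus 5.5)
Your proof is correct and uses the same reduction as the paper: one message per job with $K_m=p_j$, $\tau_m^{\rm rel}=r_j+1$, $w_m=w_j$, $C=1$, and the identity $\sum_m w_m\Delta_m=\sum_j w_jC_j-\sum_j w_jr_j$. You go beyond the paper, which simply asserts the slot-boundary preemption correspondence, by fixing $H$, noting pseudo-polynomiality, and addressing integer preemption points. For that last step the clean argument is the global preemptive list schedule that always runs the released job earliest in the original completion order: it delays no job and, with integer data, makes decisions only at integer release and completion epochs. A reordering confined to intervals between events would not suffice, because it keeps the possibly fractional per-interval amounts.
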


\begin{proof}
Take an arbitrary instance of
$1|r_j,\mathrm{pmtn}|\sum_j w_j C_j$ with integer processing times
$p_j$, release dates $r_j$, and positive weights $w_j$. For every job
$j$, construct one message $m$ with
\begin{equation}
    K_m=p_j,
    \qquad
    \tau_m^{\rm rel}=r_j+1,
    \qquad
    w_m=w_j,
\end{equation}
and let $\mathcal P_m^{\rm rel}$ contain $K_m$ innovative
coded packets. Set $C=1$.

Beginning in slot $\tau_m^{\rm rel}$, each processed packet of message $m$
consumes exactly one unit of service. Because processing may
switch between messages at slot boundaries, the scheduler can preempt
and later resume any message exactly as the single-machine scheduler
can preempt and resume a job. The absolute decoding-completion slot
corresponds to the job completion time,
\begin{equation}
    T_m^{\rm dec}
    \longleftrightarrow
    C_j,
\end{equation}
while the decoding delay is
$\Delta_m=T_m^{\rm dec}-a_m+1=C_j-r_j$. Therefore
\begin{equation}
    \sum_m w_m\Delta_m
    =
    \sum_j w_j C_j
    -
    \sum_j w_j r_j.
\end{equation}
The second term is a constant fixed by the instance. Hence minimizing
weighted decoding delay is equivalent to minimizing the
classical weighted completion-time objective, and the two problems
have the same feasibility structure.
Since
$1|r_j,\mathrm{pmtn}|\sum_j w_j C_j$ is strongly NP-hard, the
restricted problem is strongly NP-hard. The general
problem contains this restriction and is therefore
strongly NP-hard as well.
\end{proof}

The theorem is deliberately stated with $C=1$, so the primary hardness
result does not rely on parallel processing. It also uses the simple
batch-release trace in which all packets of a message become available
together; arbitrary packet-by-packet exogenous arrivals only generalize
this availability constraint. Two classical boundaries are useful for
orientation. With a common availability epoch and $C=1$, the problem
reduces to Smith's ratio rule, while full initial availability with
input-dependent $C$ contains the fully parallel weighted-completion
problem of Zhang \emph{et al.} \cite{smith1956optimizers,zhang2013fullyparallel}.
The former correspondence is strengthened to progressive nonblocking
arrivals in Section~\ref{sec:maids_boundaries}; the latter is a secondary
parallel-capacity boundary rather than the main hardness mechanism.

\subsection{Exact Homogeneous Full-Buffer Benchmark}
\label{sec:homogeneous}

As an analytical reference, suppose all messages have the same
requirement $K_m=k$, all $k$ innovative packets for every message arrive at the beginning of slot~1, and processing is deterministic. Order the weights as
$w_{(1)}\geq\cdots\geq w_{(M)}$.

\begin{proposition}[Exact weighted delay under homogeneous full buffering]
\label{prop:homogeneous_exact}
Under the assumptions above,
\begin{equation}
    \mathrm{OPT}_{\mathrm{hom}}
    =
    \sum_{j=1}^{M}
    w_{(j)}
    \left\lceil
        \frac{jk}{C}
    \right\rceil.
    \label{eq:homogeneous_opt}
\end{equation}
\end{proposition}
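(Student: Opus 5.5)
We prove the formula by establishing a matching upper bound (a concrete schedule) and lower bound (valid for every feasible schedule). All messages have $a_m=1$, so $\Delta_m=T_m^{\rm dec}$. The total work is $Mk$ packets and the capacity is $C$ packets per slot. The optimal policy should process messages sequentially in nonincreasing weight order (Smith's rule with equal processing times), packing the capacity fully in every slot.

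\emph{Upper bound.} Index messages by nonincreasing weight, and concatenate their work into a single sequence of $Mk$ packet units: units $(j-1)k+1,\dots,jk$ belong to the $j$-th message. In slot $t$, process units $(t-1)C+1,\dots,tC$. Feasibility needs two checks.
\begin{itemize}
\item Per slot, at most $C$ units are processed, so \eqref{eq:rx_processing_capacity} holds.
\item Each message receives an integer number of units per slot, bounded by its remaining deficit. Since all packets are buffered at slot~1, $b_m(t)=d_m(t)$, so \eqref{eq:buffer_service_constraint} holds.
\end{itemize}
The $j$-th message finishes when unit $jk$ is processed, which is in slot $\lceil jk/C\rceil$. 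Hence this schedule attains $\sum_j w_{(j)}\lceil jk/C\rceil$.

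\emph{Lower bound.} Take any feasible schedule and order the messages by completion slot, $T_{\pi(1)}\le\cdots\le T_{\pi(M)}$, breaking ties arbitrarily. By the end of slot $T_{\pi(j)}$, the messages $\pi(1),\dots,\pi(j)$ have all completed, so at least $jk$ units have been processed. Capacity allows at most $C\,T_{\pi(j)}$ units by then, so $C\,T_{\pi(j)}\ge jk$, and integrality gives $T_{\pi(j)}\ge\lceil jk/C\rceil\triangleq c_j$. Therefore
\[
\sum_m w_m\Delta_m \;\ge\; \sum_j w_{\pi(j)}c_j .
\]
The sequence $c_j$ is nondecreasing in $j$. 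By the rearrangement inequality, $\sum_j w_{\pi(j)}c_j$ over permutations $\pi$ is minimized by pairing the largest weights with the smallest $c_j$, which gives $\sum_j w_{(j)}c_j$. This matches the upper bound and proves \eqref{eq:homogeneous_opt}.

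The main obstacle is the lower-bound step, specifically separating the two ingredients. The counting argument fixes the completion positions $c_j$ independently of weights. The exchange (rearrangement) argument then assigns weights to those positions, and it requires ordering messages by completion time rather than by weight. Ties in completion slots are harmless, because the bound $T_{\pi(j)}\ge c_j$ holds for any tie-breaking. Consistent with Theorem~\ref{thm:release_nphard}, no hardness arises here because all release times are equal and processing times are identical.
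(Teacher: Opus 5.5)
Your proof is correct and follows the paper's argument. You use the same counting bound $T_{\pi(j)}\ge\lceil jk/C\rceil$ on the $j$th ordered completion slot, the same pairing of the largest weights with the smallest bounds, and a work-conserving schedule in non-increasing weight order that passes residual capacity to the next message so that every bound is attained. Your version just makes explicit two points the paper leaves unstated: the feasibility check $b_m(t)=d_m(t)$, and the use of $w_m>0$ when passing from $T_{\pi(j)}\ge c_j$ to the weighted-sum bound.
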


\begin{proof}
Completing any $j$ messages requires processing at least $jk$ coded
packets, so the $j$th ordered completion time of every schedule is at
least $\lceil jk/C\rceil$. Pairing the largest weights with the
smallest attainable completion times gives
\eqref{eq:homogeneous_opt} as a lower bound. A work-conserving
schedule that processes messages in non-increasing weight order and
uses residual capacity in a completion slot immediately on the next
message attains every bound with equality.
\end{proof}

The proposition provides an exact reference for homogeneous full-buffer
instances. It is not a lower bound for arbitrary heterogeneous
requirements or delayed packet arrivals.

\section{Online Scheduling for the Heterogeneous Problem}
\label{sec:general_scheduler}

The offline problem requires future packet-availability information
and is strongly NP-hard even with a single processing unit under deterministic destination-side
packet availabilities. This
motivates a tractable online heuristic for the continuous message
stream. At each slot, the scheduler uses only information available
locally at the receiver: the current dynamic active-message set,
decoder rank states, and coded packets already stored in the receiver
buffers. MAIDS does not claim global optimality for arbitrary exogenous
arrival traces; instead, it uses a residual-work priority and solves the
resulting per-slot allocation problem exactly.

\subsection{Residual-Work Priority of MAIDS}

At the beginning of slot $t$, after adding the new arrivals, the
scheduler computes the buffered innovative availability $b_m(t)$ from
\eqref{eq:buffered_innovation}. Since $0\leq b_m(t)\leq d_m(t)$,
$b_m(t)$ is exactly the maximum useful service that message $m$ can
receive in the current slot. It is determined entirely by packets that
have already arrived, so MAIDS schedules realized innovative
information rather than predicted future packet contributions.

The weighted decoding-delay objective favors completing important
messages early, while heterogeneous messages may require different
amounts of remaining receiver work. Motivated by Smith's ratio rule
and its preemptive residual-work counterpart WSRPT, MAIDS assigns each
unfinished message the residual-work density
\begin{equation}
    \pi_m(t)
    =
    \frac{w_m}{d_m(t)},
    \qquad
    d_m(t)>0.
    \label{eq:message_priority}
\end{equation}
A positive priority is actionable only while message $m$ remains in
the dynamic active set and innovative information is currently buffered,
i.e., when $m\in\mathcal M(t)$ and $b_m(t)>0$. Therefore MAIDS
recomputes \eqref{eq:message_priority} every slot over the currently
serviceable messages; decoded messages leave the set and newly observed
messages enter it online.

The ratio in \eqref{eq:message_priority} should be interpreted as a
scheduling heuristic for the general exogenous-arrival problem, not
as an exact one-step minimizer of the original finite-horizon
objective. Its classical scheduling interpretation becomes exact only
in the special cases discussed in Section~\ref{sec:maids_relations}.

\subsection{Per-Slot MAIDS Allocation}

Given the current priorities and buffered innovative availability,
MAIDS solves the per-slot surrogate
\begin{equation}
\begin{aligned}
    \max_{\{y_m(t)\}}
    \quad&
    \sum_{m\in\mathcal M(t)}
    \pi_m(t)y_m(t)
    \\
    \mathrm{s.t.}\quad&
    \sum_{m\in\mathcal M(t)}y_m(t)\leq C,
    \\
    &
    0\leq y_m(t)\leq b_m(t),
    \qquad
    \forall m,
    \\
    &
    y_m(t)\in\mathbb Z_{\geq0}.
\end{aligned}
\label{eq:slot_processing}
\end{equation}
Each innovative DoF consumes one identical receiver processing unit,
and the objective in \eqref{eq:slot_processing} is linear. Hence the
per-slot problem is solved exactly by sorting the serviceable messages
in non-increasing order of $\pi_m(t)$ and greedily assigning processing
units up to $b_m(t)$. This exactness applies only to
\eqref{eq:slot_processing}, not to the original finite-horizon
weighted-delay problem. The allocation step requires
$O(|\mathcal M(t)|\log|\mathcal M(t)|)$ ordering time, apart from
linear-time buffer bookkeeping.

After the counts $\{y_m(t)\}$ are chosen, the receiver processes any
$y_m(t)$ packets from $\mathcal Q_m^+(t)$. Under the rank-unit
abstraction, each selected packet contributes one innovative DoF while
$d_m(t)>0$. Let $\mathcal P_m(t)\subseteq\mathcal Q_m^+(t)$ satisfy
\begin{equation}
    |\mathcal P_m(t)|=y_m(t).
    \label{eq:selected_packet_set}
\end{equation}
The next-slot receiver buffer is therefore
\begin{equation}
    \mathcal Q_m(t+1)
    =
    \mathcal Q_m^+(t)\setminus\mathcal P_m(t)
    \label{eq:buffer_update}
\end{equation}
for messages that remain unfinished; once $d_m(t+1)=0$, message $m$
and any remaining packets associated with it are removed from the
active receiver state.

\begin{algorithm}[!t]
\footnotesize
\DontPrintSemicolon
\SetCommentSty{footnotesize}

\textbf{Input:} receiver processing capacity $C$.\\
\textbf{State:} dynamic active-message set $\mathcal M(t)$, decoder
ranks $r_m(t)$, and receiver buffers $\mathcal Q_m(t)$.\\

Initialize $\mathcal M(1)\leftarrow\emptyset$.\;

\For{$t=1,2,\ldots$}{
    \tcp{Observe the exogenous message/packet stream}
    Observe newly received packet sets $\{\mathcal A_m(t)\}$ and
    newly observed message identities $\mathcal M^{\rm new}(t)$.\;

    \For{each $m\in\mathcal M^{\rm new}(t)$}{
        Initialize $r_m(t)\leftarrow0$ and
        $\mathcal Q_m(t)\leftarrow\emptyset$.\;
        Add $m$ to the active set $\mathcal M(t)$.\;
    }

    Set
    $\mathcal Q_m^+(t)\leftarrow
    \mathcal Q_m(t)\cup\mathcal A_m(t)$
    for every $m\in\mathcal M(t)$.\;

    \tcp{Compute current serviceable states}
    \For{each $m\in\mathcal M(t)$}{
        Compute $d_m(t)\leftarrow[K_m-r_m(t)]^+$.\;
        Compute $b_m(t)$ from \eqref{eq:buffered_innovation}.\;
        \eIf{$d_m(t)>0$ and $b_m(t)>0$}{
            $\pi_m(t)\leftarrow w_m/d_m(t)$\;
        }{
            $\pi_m(t)\leftarrow0$\;
        }
    }

    \tcp{Allocate receiver processing capacity}
    Sort serviceable messages by non-increasing $\pi_m(t)$.\;
    Greedily allocate at most $C$ processing units, with
    $y_m(t)\leq b_m(t)$ for each active message.\;

    \tcp{Select packets and update message states}
    \For{each $m\in\mathcal M(t)$}{
        Select any $\mathcal P_m(t)\subseteq\mathcal Q_m^+(t)$ with
        $|\mathcal P_m(t)|=y_m(t)$.\;
        Set $r_m(t+1)\leftarrow r_m(t)+y_m(t)$.\;
        Update $\mathcal Q_m(t+1)$ using
        \eqref{eq:buffer_update}.\;
        \If{$r_m(t+1)\geq K_m$}{
            Set $T_m^{\rm dec}\leftarrow t$ and
            $\Delta_m\leftarrow t-a_m+1$.\;
            Remove $m$ and its remaining buffered packets from the
            next-slot active state.\;
        }
    }

    Carry all remaining unfinished messages into $\mathcal M(t+1)$.\;
}
\caption{\small Streaming Message-Aware
Innovation-Deficit Scheduling (MAIDS).}
\label{alg:maids}
\end{algorithm}

\subsection{Relation to WSRPT and Smith's Rule}
\label{sec:maids_relations}

The single-processor batch-availability special case of
Theorem~\ref{thm:release_nphard} gives the clearest classical
interpretation of MAIDS. Set $C=1$, and suppose that all $K_m$
innovative packets required by each message become available together
at its destination-side availability time. Once message $m$ is
available, its remaining receiver work is exactly $d_m(t)$. MAIDS
therefore selects, among currently available unfinished messages, the
message with maximum
\begin{equation}
    \frac{w_m}{d_m(t)}.
\end{equation}
Under the mapping used in the NP-hardness proof,
$d_m(t)$ is the remaining processing time of the corresponding job.
Hence MAIDS reduces exactly to the Weighted Shortest Remaining
Processing Time (WSRPT) rule for
$1|r_j,\mathrm{pmtn}|\sum_j w_j C_j$. Batsyna \emph{et al.} study
this WSRPT rule as an efficient heuristic and establish a local
optimality property for its ordering
\cite{batsyna2014online}. This equivalence does not imply global
optimality when availability times differ.

With common full availability and $C=1$, the same residual-ratio
rule starts from $w_m/K_m$ and keeps the selected message at increasing
priority as its deficit decreases; it therefore recovers Smith's
non-increasing $w_m/K_m$ order \cite{smith1956optimizers}. The formal
result in Section~\ref{sec:maids_boundaries} extends this intuition to
progressive nonblocking arrivals. Under full availability and $C>1$,
the rule is also consistent with the fully parallel residual-ratio
packing interpretation \cite{zhang2013fullyparallel}; when all
$K_m=k$, it attains the homogeneous benchmark in
Proposition~\ref{prop:homogeneous_exact}.

Under general packet-by-packet exogenous arrivals, these classical
relations no longer characterize the global optimum. MAIDS instead
recomputes the residual-work ratio only over messages with currently
buffered innovative information, so a high-priority message cannot
consume processing service before its coded information has actually
arrived.

\subsection{Performance Boundaries of MAIDS}
\label{sec:maids_boundaries}

The preceding classical correspondences identify two sources of
online difficulty in the general problem: heterogeneous
message priorities and progressive packet availability. The following
three boundary results separate cases in which MAIDS is exactly optimal
from the unrestricted weighted online regime, where no constant
competitive guarantee is possible for any deterministic scheduler.

The first condition identifies progressive arrival traces that do not
starve a single-unit receiver.

\begin{definition}[Nonblocking arrival trace]
\label{def:receiver_nonblocking}
For $C=1$, a finite trace is \emph{nonblocking} if, for every
message $m$ and every slot $t\geq a_m$,
\begin{equation}
    R_m^{\rm arr}(t)
    \geq
    \min\{K_m,\,t-a_m+1\}.
    \label{eq:receiver_nonblocking}
\end{equation}
Thus, from activation onward, innovative information for message $m$
arrives at least as fast as a unit-rate processor could continuously
process that message. This condition allows progressive arrivals and
is strictly weaker than requiring all $K_m$ packets to be available at
activation.
\end{definition}

\begin{theorem}[Equal-weight optimality]
\label{thm:maids_equal_weight}
Consider a finite trace-conditioned instance with $C=1$, a
nonblocking arrival trace, and equal positive message weights
$w_m=w$ for all $m$. Then MAIDS minimizes the total decoding delay
$\sum_m \Delta_m$.
\end{theorem}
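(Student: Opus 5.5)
The plan is a relaxation argument. We drop the progressive-arrival constraint, keeping only the activation times, so that the trace-conditioned instance becomes the classical preemptive problem $1|r_j,\mathrm{pmtn}|\sum_j C_j$. We then show that, under the nonblocking condition of Definition~\ref{def:receiver_nonblocking}, MAIDS executes exactly SRPT on that relaxed instance. Since SRPT is optimal for the relaxation and MAIDS is feasible for the original trace, optimality follows. Concretely, map each message $m$ to a job $j$ with release date $a_m-1$ and processing time $K_m$, exactly as in the proof of Theorem~\ref{thm:release_nphard}. Then $\Delta_m=T_m^{\rm dec}-a_m+1$ differs from the job completion time by the instance constant $a_m-1$, so $\sum_m\Delta_m=\sum_j C_j-\sum_m(a_m-1)$, and it suffices to minimize total completion time.

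\emph{Step 1 (relaxation bound).} Any feasible solution of \eqref{eq:offline_problem} for the trace satisfies $Y_m(t)\le R_m^{\rm arr}(t)$. This constraint is at least as restrictive as merely requiring that $m$ be served no earlier than slot $a_m$. Hence every trace-feasible schedule is feasible for the relaxed release-date instance, and the trace optimum is at least the relaxed optimum. By the classical result that SRPT minimizes $\sum_j C_j$ for $1|r_j,\mathrm{pmtn}|\sum_j C_j$ (Schrage's theorem, proved by a standard exchange argument), the relaxed optimum equals the SRPT value.

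\emph{Step 2 (serviceability lemma).} This is the key step and the only place the nonblocking condition enters. We show that for every active message with $d_m(t)>0$ one has $b_m(t)>0$, under any schedule and in particular under MAIDS. Since $C=1$, at most one packet of $m$ is processed per slot, so $r_m(t)\le t-a_m$. The number of buffered innovative packets is $R_m^{\rm arr}(t)-r_m(t)$, which by \eqref{eq:receiver_nonblocking} is at least $\min\{K_m,t-a_m+1\}-r_m(t)$. We split into two cases:
\begin{itemize}
\item If $K_m\le t-a_m+1$, this lower bound is $K_m-r_m(t)=d_m(t)>0$.
\item Otherwise, it is at least $(t-a_m+1)-(t-a_m)=1$.
\end{itemize}
In both cases $b_m(t)=\min\{d_m(t),|\mathcal Q_m^+(t)|\}\ge1$. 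I expect this bookkeeping to be the main obstacle: one must argue carefully that $|\mathcal Q_m^+(t)|$ equals the capped arrivals minus processed packets under the rank-unit abstraction, and that the cap at $K_m$ does not break the inequality.

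\emph{Step 3 (MAIDS is SRPT).} With equal weights, $\pi_m(t)=w/d_m(t)$, so ordering by non-increasing priority is ordering by non-decreasing remaining deficit. By Step 2 the set of serviceable messages coincides with the set of released, unfinished jobs, so no message is excluded for lack of buffered packets. With $C=1$, the greedy solution of \eqref{eq:slot_processing} therefore processes one packet of an unfinished released message with smallest remaining work, whenever one exists. This is precisely SRPT, with ties broken arbitrarily; since any tie-breaking rule is optimal for SRPT, the choice does not matter. Finally, the MAIDS schedule is trace-feasible, because it only processes buffered packets. Its value therefore equals the relaxed optimum, which by Step 1 lower-bounds the trace optimum, so MAIDS minimizes $\sum_m\Delta_m$.
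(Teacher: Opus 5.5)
Your proposal is correct and follows essentially the same route as the paper: the nonblocking condition together with $r_m(t)\le t-a_m$ gives $b_m(t)\ge 1$ for every unfinished active message, so MAIDS with equal weights and $C=1$ coincides with SRPT on the release-date instance $r_j=a_m-1$, $p_j=K_m$, and Schrage's optimality result gives the claim. Your explicit relaxation step (every trace-feasible schedule is feasible for the release-date instance, and MAIDS is trace-feasible) simply spells out the direction of the paper's ``equivalent'' claim that the argument actually needs.
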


\begin{proof}
At the beginning of slot $t$, any unfinished message $m$ has received
at most one unit of service in each earlier slot since its
activation, and hence
$r_m(t)\leq t-a_m$. If $t-a_m+1\leq K_m$, condition
\eqref{eq:receiver_nonblocking} gives
$R_m^{\rm arr}(t)\geq t-a_m+1>r_m(t)$; if
$t-a_m+1>K_m$, it gives $R_m^{\rm arr}(t)=K_m>r_m(t)$ for every
unfinished $m$. Therefore every active unfinished message has at least
one useful packet available whenever it could be selected, so packet
availability never restricts the single-unit scheduling decision.

The resulting problem is equivalent to preemptive
single-machine scheduling with release epoch $r_j=a_m-1$, processing
requirement $p_j=K_m$, and equal weights. Since
$\pi_m(t)=w/d_m(t)$, MAIDS selects the message with the smallest
remaining processing requirement and therefore coincides with the
Shortest Remaining Processing Time (SRPT) discipline. SRPT minimizes
total flow/completion time with release dates under preemption
\cite{schrage1968srpt}. Identifying the classical completion epoch
with $C_j=T_m^{\rm dec}$ gives
$C_j-r_j=T_m^{\rm dec}-a_m+1=\Delta_m$, so the same schedule
minimizes $\sum_m\Delta_m$.
\end{proof}

\begin{corollary}[Common-activation weighted optimality]
\label{cor:maids_common_activation}
Consider $C=1$ and a nonblocking arrival trace. If all
messages have a common activation time $a_m=a$ but may have arbitrary
positive weights, then MAIDS is optimal and completes messages in
non-increasing $w_m/K_m$ order.
\end{corollary}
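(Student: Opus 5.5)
The plan is to reuse the first step of the proof of Theorem~\ref{thm:maids_equal_weight} and then argue directly with Smith's rule. First I show that packet availability never binds. Under the nonblocking condition \eqref{eq:receiver_nonblocking} with $C=1$, every unfinished message satisfies $r_m(t)\le t-a_m<R_m^{\rm arr}(t)$ at every slot $t\ge a$. Hence $b_m(t)\ge1$ for every active unfinished message, and each is serviceable in every slot. Because all activations equal $a$, the feasible single-unit schedules from slot $a$ onward are then exactly the preemptive schedules of $1|\mathrm{pmtn}|\sum_j w_jC_j$ with common release date, jobs $p_j=K_m$, and weights $w_m$.

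Next I write the objective as $\sum_m w_m\Delta_m=\sum_m w_m(T_m^{\rm dec}-a+1)$. This differs from $\sum_m w_m T_m^{\rm dec}$ only by the constant $(1-a)\sum_m w_m$, so minimizing weighted decoding delay is the same as minimizing weighted completion time.

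Two classical facts then identify the optimum.
\begin{enumerate}
\item With a common release date, preemption is useless. Given any preemptive schedule, reorder it so each message's service is contiguous, in the order of completion times. No completion time increases, because the message completing $i$th needs at most the total work of the first $i$ completers.
\item Among nonpreemptive orders, the exchange argument of Smith's rule \cite{smith1956optimizers} shows that non-increasing $w_m/K_m$ is optimal. Swapping adjacent messages $i,j$ changes the cost by $w_jK_i-w_iK_j$.
\end{enumerate}
Together these show that some schedule processing messages contiguously in non-increasing $w_m/K_m$ order is optimal.

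The remaining task, which I expect to be the main obstacle, is to show that MAIDS produces exactly such a schedule. I would argue by induction on slots.
\begin{enumerate}
\item At slot $a$, every active message has $d_m=K_m$ and $b_m\ge1$, so MAIDS picks a message maximizing $w_m/K_m$.
\item Suppose MAIDS is serving message $m$ and has not finished it. Its priority $w_m/d_m(t)$ has strictly increased. Every other unfinished message $m'$ is untouched, so its priority is still $w_{m'}/K_{m'}\le w_m/K_m<w_m/d_m(t)$. Hence MAIDS continues with $m$ until decoding, so there is no preemption.
\item After $m$ completes, the remaining messages all still have their original ratios. MAIDS therefore picks the next largest $w_{m'}/K_{m'}$.
\end{enumerate}
The delicate point is ties. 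I would fix a tie-breaking rule and note that any tie-breaking among equal ratios yields equal cost, since the exchange term is then zero. The other delicate point is confirming that serviceability, established in the first step, holds in every slot, so that MAIDS never idles or is forced off the chosen message. With both points settled, MAIDS attains the Smith's rule optimum, which proves optimality and the stated completion order.
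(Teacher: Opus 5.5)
Your proposal is correct and takes essentially the paper's route. Nonblocking arrivals with $C=1$ make availability irrelevant, MAIDS starts with a maximum-$w_m/K_m$ message whose ratio only grows under service so it is never preempted, and the resulting nonpreemptive order is Smith's rule; you additionally spell out the preemption-is-useless and constant-shift steps that the paper leaves implicit. One small slip: the chain $r_m(t)\le t-a_m<R_m^{\rm arr}(t)$ is false once $t-a_m\ge K_m$ (then $R_m^{\rm arr}(t)=K_m\le t-a_m$), so split cases as in the proof of Theorem~\ref{thm:maids_equal_weight} and use $r_m(t)<K_m=R_m^{\rm arr}(t)$ for unfinished messages in that regime.
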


\begin{proof}
By Definition~\ref{def:receiver_nonblocking}, packet availability never
blocks an unfinished message. At the common activation time, MAIDS
orders messages by $w_m/K_m$. Once a message $m$ is selected, each unit
of service decreases $d_m(t)$ while leaving every unserved message's
deficit unchanged. Hence $w_m/d_m(t)$ can only increase relative to the
ratios of the other waiting messages, so MAIDS continues serving $m$
until completion. The resulting nonpreemptive completion order is
therefore non-increasing $w_m/K_m$, which is Smith's optimal ratio rule
\cite{smith1956optimizers}. Full initial availability is a special case of the
nonblocking condition, so this corollary extends the
full-availability Smith-rule interpretation in
Section~\ref{sec:maids_relations} to progressive arrivals.
\end{proof}

The preceding results are exact positive boundaries. The next result
shows why a universal constant competitive ratio should not be
expected once arbitrary weights and arbitrary online activations are
allowed.

\begin{proposition}[No constant deterministic competitive ratio]
\label{prop:no_constant_online}
For the unrestricted weighted online problem with $C=1$, no
deterministic scheduling policy has an $O(1)$
competitive ratio for total weighted decoding delay, even in the
special case where all $K_m$ required packets of each message are
available at its activation time.
\end{proposition}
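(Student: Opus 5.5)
The plan is to reduce the statement to the known lower bound for online weighted flow time on a single machine \cite{bansal2009weightedflow}. Under full availability at activation, the problem is exactly preemptive single-machine scheduling with release dates $r_j=a_m-1$, processing times $p_j=K_m$, weights $w_j=w_m$, and objective $\sum_j w_j(C_j-r_j)$. This is the same correspondence used in the proof of Theorem~\ref{thm:release_nphard}, and it gives $\Delta_m=C_j-r_j$, so total weighted decoding delay equals total weighted flow time. Two points need checking. First, the online information structure must coincide: a job becomes known at its release, just as a message becomes known at activation. Second, since each message's $K_m$ packets are all present at $a_m$, the only constraint on the receiver is the unit capacity, exactly as for a preemptive single machine. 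Under these two facts, any deterministic receiver policy is a deterministic online scheduler for $1|r_j,\mathrm{pmtn}|\sum_j w_j F_j$, with identical objective values on corresponding instances. The same holds for the offline optimum. The lower bound of Bansal and Chan, which is $\Omega(\min\{\sqrt{\log W},\sqrt{\log\log P},(\log\log n)^{1/2}\})$ and grows unboundedly, then transfers directly.

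One technical point is integrality. The classical adversary may use arbitrary rational release dates and processing times, whereas here slots are discrete and $K_m$ are integers. I would handle this by scaling: multiply all times by a common denominator $L$. A rational instance then becomes an integer instance whose schedules correspond to scaled schedules. Both the online cost and the optimal cost scale by $L$, so ratios are preserved. Because decisions happen at slot boundaries and preemption at integer points is allowed after scaling, any discrete policy induces a continuous one, which the adversary already defeats.

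If a self-contained argument is preferred, I would sketch the adversary directly. Release a long job of weight $1$ and many short jobs with geometrically increasing weights. The adversary watches whether the policy is working on the heavy stream or on the leftover work, and adapts. If the policy postpones old work, a continual stream of unit jobs turns that pending work into unbounded weighted flow. If it finishes old work first, the newly arrived heavy jobs wait, and their weight dominates. The main obstacle is making this construction quantitatively tight, so I would rely on the cited theorem rather than reproduce it. The remaining care lies in confirming that the adaptive adversary's instances are valid exogenous traces, that is, that activations and packet sets may depend on the policy's past decisions. This holds because the trace is exogenous but chosen adversarially online, and competitiveness is quantified over all traces.
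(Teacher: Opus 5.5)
Your proposal is correct and is essentially the paper's proof: the same job-to-message map ($a_m=r_j+1$, $K_m=p_j$, $w_m=w_j$, all packets present at activation) makes $\Delta_m$ equal the flow time $C_j-r_j$, so a deterministic $O(1)$-competitive receiver policy would contradict the Bansal--Chan $\omega(1)$ lower bound. Your integrality remark is extra care the paper omits; if you keep it, note that it needs more than both costs scaling by $L$: it also needs the slotted optimum to equal the continuous preemptive optimum on integer instances, which holds because with integer $r_j,p_j$, EDF with deadlines $\lfloor C_j\rfloor$ gives a slotted schedule no worse than any continuous one.
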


\begin{proof}
Consider the classical online preemptive single-machine weighted-flow-
time problem. For each job $j$ with release time $r_j$, processing
requirement $p_j$, and weight $w_j$, construct a message with
$a_m=r_j+1$, $K_m=p_j$, and $w_m=w_j$, and make all $K_m$ packets
available at the beginning of slot $a_m$. One unit of service
then corresponds exactly to one unit of preemptive machine processing.
If the message completes in slot $T_m^{\rm dec}$, identify the
classical completion epoch with $C_j=T_m^{\rm dec}$. Its decoding delay
is
\begin{equation}
    \Delta_m
    =T_m^{\rm dec}-a_m+1
    =C_j-r_j,
\end{equation}
which is exactly the classical flow time. Thus any deterministic
$O(1)$-competitive scheduler for this special case would
imply a deterministic $O(1)$-competitive algorithm for online
preemptive single-machine weighted flow time. Bansal and Chan proved
an $\omega(1)$ lower bound on the competitive ratio of every
deterministic online algorithm for that problem
\cite{bansal2009weightedflow}, yielding the claim.
\end{proof}

Proposition~\ref{prop:no_constant_online} is a boundary for the entire
online problem rather than a limitation specific to MAIDS.
Together with Theorem~\ref{thm:maids_equal_weight} and
Corollary~\ref{cor:maids_common_activation}, it identifies a clear
frontier: MAIDS is exactly optimal when either equal weights reduce its
priority to SRPT under nonblocking arrivals or common activation
reduces it to Smith's ratio rule, whereas the fully heterogeneous
online weighted-delay problem does not admit a universal constant
competitive guarantee for any deterministic policy.
\section{Simulation Results}
\label{sec:simulation}

The evaluation consists of three complementary parts. First, the general
streaming experiment tests the online scheduler under exogenous
packet-by-packet arrivals against two online baselines. Second, the finite
batch-release experiment instantiates the NP-hard special case of
Section~\ref{sec:complexity} and compares MAIDS with WSRPT and the exact
offline solution of \eqref{eq:offline_problem}. Third, the
performance-boundary validation checks two exact regimes from
Section~\ref{sec:maids_boundaries}: an
equal-weight nonblocking progressive-arrival regime, where
MAIDS should coincide with SRPT, and a common-full-availability regime,
which is the full-buffer endpoint of the common-activation weighted
boundary and should coincide with Smith/WSPT. The impossibility boundary
of Proposition~\ref{prop:no_constant_online} is a worst-case analytical
statement and is therefore not treated as a Monte Carlo validation.

\subsection{Simulation Setup}

All simulations are slot-based and preserve the control
boundary of Section~\ref{sec:system_model}. The upstream process first
generates an exogenous coded-packet arrival trace; the same realized
trace is then replayed under all policies being compared.
Hence a scheduling decision can change only the order in which
already-arrived packets are processed and cannot change future packet
arrivals. The rank-unit abstraction is retained, so every successfully
received coded packet contributes one innovative DoF until its message
reaches rank $K_m$. This isolates scheduling from finite-field
dependence and makes the simulated service unit identical to that in
\eqref{eq:rx_processing_capacity}.

For a completed message cohort $\mathcal C$, the reported performance
metric is the weighted mean decoding delay
\begin{equation}
    \overline{\Delta}_w(\mathcal C)
    =
    \frac{
        \sum_{m\in\mathcal C} w_m\Delta_m
    }{
        \sum_{m\in\mathcal C} w_m
    }.
    \label{eq:sim_weighted_mean_delay}
\end{equation}
This is the normalized form of the objective in
\eqref{eq:weighted_delay_objective}; it keeps results comparable across
random traces containing different numbers and mixtures of messages.

Table~\ref{tab:simulation_parameters} summarizes the main parameters.
For streaming, $\rho/C$ is the long-run innovative-packet workload
offered to the receiver, normalized by processing capacity, and is
varied from $0.40$ to $0.95$ in steps of $0.05$; hence all steady-state
experiments satisfy $\rho/C<1$. For the finite batch and common-full
benchmarks, the same numerical grid is denoted by
$x\equiv\rho/C$ only as a normalized workload-control parameter, not as
a steady-state offered load.
New-message requests follow a Poisson process with
\begin{equation}
    \lambda_{\rm msg}
    =
    \frac{\rho}{\mathbb E[K]},
    \qquad
    \mathbb E[K]=42,
    \label{eq:sim_message_rate}
\end{equation}
so increasing $\rho$ increases the number of overlapping messages
without increasing the packet-generation speed of an already active
message. Each source assigned to a message independently emits its next
coded packet with probability $p_s=0.5$ in each slot. A source remains
locked to its current message until its assigned packet quota is
exhausted, after which it may support a later message. Thus packets of
successive messages do not interleave at an individual source, while a
single message may simultaneously receive packets from multiple
sources.

Across all streaming settings, the mean realized $\rho/C$ differs from
the target value by at most $0.0132$. The selected streaming parameters
also keep the nominal packet workload below the aggregate source
transmission opportunity $Np_s$: the largest target workload is
$\rho=4\times0.95=3.8$ packets/slot, whereas $Np_s=12\times0.5=6$ is the aggregate expected transmission
rate when all sources are active.

\begin{table}[!t]
\centering
\caption{Simulation Parameters}
\label{tab:simulation_parameters}
\scriptsize
\begin{tabular}{@{}ll@{}}
\toprule
\textbf{Parameter} & \textbf{Value} \\
\midrule
\multicolumn{2}{c}{\emph{General streaming}} \\
Receiver capacity $C$ & $\{1,2,4\}$ packets/slot \\
Normalized load $\rho/C$ & $0.40:0.05:0.95$ \\
Message weights $w_m$ & $\{1,2,4,8,16\}$, uniform \\
Upstream sources $N$ & $12$ \\
Generation size $K_m$ & $U\{4,\ldots,80\}$ \\
Support size $G_m$ & $\{1,2,3\}$ \\
$\Pr(G_m=1,2,3)$ & $(0.50,0.35,0.15)$ \\
Per-source send probability $p_s$ & $0.5$ \\
Warm-up / measurement / guard & $5000/15000/15000$ slots \\
Monte Carlo trials & $96$ per $(C,\rho/C)$ \\
\midrule
\multicolumn{2}{c}{\emph{NP-hard batch release}} \\
Receiver capacity $C$ & $\{1,2,4\}$ packets/slot \\
Workload parameter $x\equiv\rho/C$ & $0.40:0.05:0.95$ \\
Message weights $w_m$ & $\{1,2,4,8,16\}$, uniform \\
Messages per instance $M$ & $8$ \\
Generation size $K_m$ & $U\{2,\ldots,10\}$ \\
Release-window fraction $\beta$ & $0.35$ \\
Monte Carlo trials & $16$ per $(C,x)$ \\
Exact MILP time limit & $30$ s per instance \\
\midrule
\multicolumn{2}{c}{\emph{Equal-weight nonblocking boundary}} \\
Receiver capacity / messages & $C=1$, $M=10$ \\
Message weights & $w_m=1$ for all $m$ \\
Generation size $K_m$ & $U\{2,\ldots,20\}$ \\
Activation time $a_m$ & $U\{1,\ldots,12\}$ \\
Packet availability & one packet/slot from $a_m$ \\
Random instances & $500$ \\
\midrule
\multicolumn{2}{c}{\emph{Common full availability}} \\
Receiver capacity / messages & $C=1$, $M=5$ \\
Workload parameter $x\equiv\rho/C$ & $0.40:0.05:0.95$ \\
Message weights $w_m$ & $\{1,2,4,8,16\}$, uniform \\
Reference horizon & $24$ slots \\
Monte Carlo trials & $40$ per $x$ value \\
\bottomrule
\end{tabular}
\end{table}

For streaming, a trial contains a $5000$-slot warm-up, a $15000$-slot
measurement interval, and a $15000$-slot guard interval. The metric in
\eqref{eq:sim_weighted_mean_delay} is computed for messages activated
during the measurement interval; the guard interval allows those
messages to finish without truncating their delays. Within a fixed
trial and load, MAIDS and all baselines use exactly the same exogenous
arrival trace.

The positive performance boundaries use finite instances. In the
equal-weight case, the parameters in Table~\ref{tab:simulation_parameters}
are combined with one progressive arrival per slot from activation until
completion of the message's $K_m$ arrivals:
\begin{equation}
    |\mathcal A_m(t)|
    =
    \begin{cases}
        1, & a_m\leq t\leq a_m+K_m-1,\\
        0, & \text{otherwise}.
    \end{cases}
    \label{eq:sim_equal_weight_arrival}
\end{equation}
Thus $R_m^{\rm arr}(t)=\min\{K_m,\max(0,t-a_m+1)\}$ and the
nonblocking condition holds with equality. The common-full validation
instead makes all required packets available at slot~1 and compares
MAIDS with Smith/WSPT.

\subsection{General Streaming Performance}

MAIDS is compared with two online baselines. Static-WSPT
uses the fixed index $w_m/K_m$ throughout the lifetime of message $m$,
whereas Weight-only uses $w_m$ and ignores message size and remaining
work. MAIDS instead recomputes $w_m/d_m(t)$ every slot and therefore
adapts its priority as a message approaches decoding.

\begin{figure}[!t]
\centering
\subfigure[$C=1$]{
    \includegraphics[width=0.96\columnwidth]{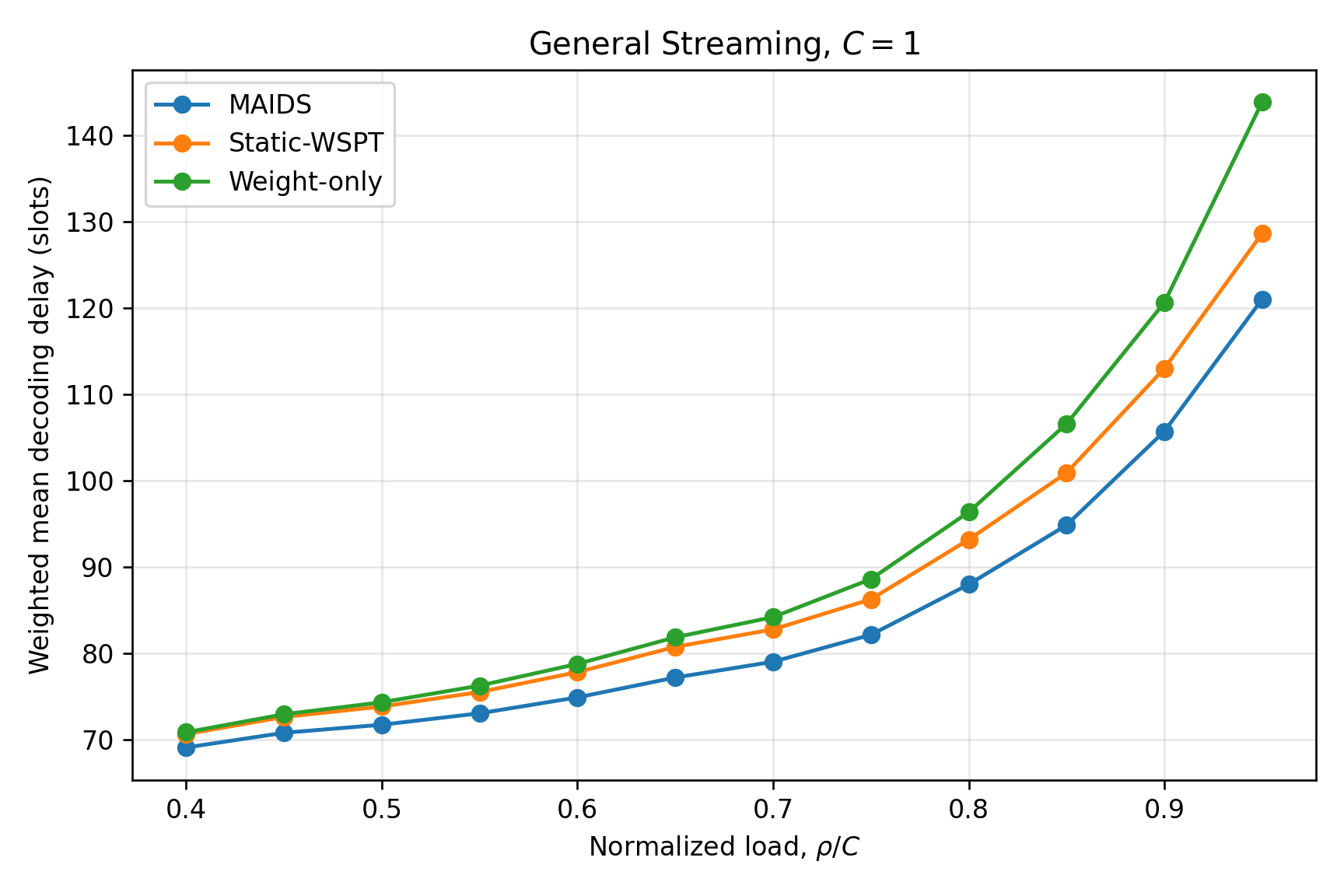}
    \label{fig:streaming_c1}
}\\[-0.5ex]
\subfigure[$C=2$]{
    \includegraphics[width=0.96\columnwidth]{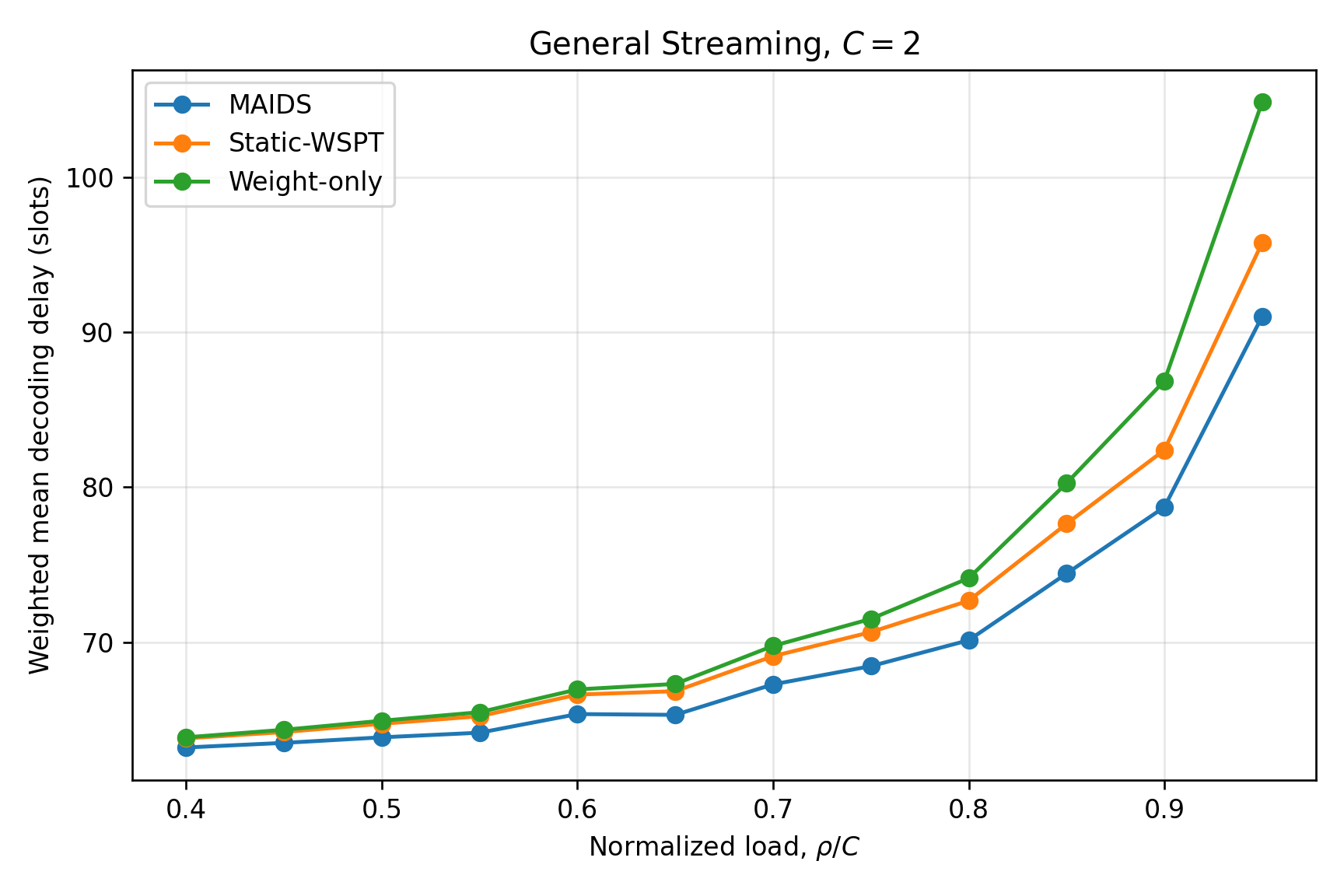}
    \label{fig:streaming_c2}
}\\[-0.5ex]
\subfigure[$C=4$]{
    \includegraphics[width=0.96\columnwidth]{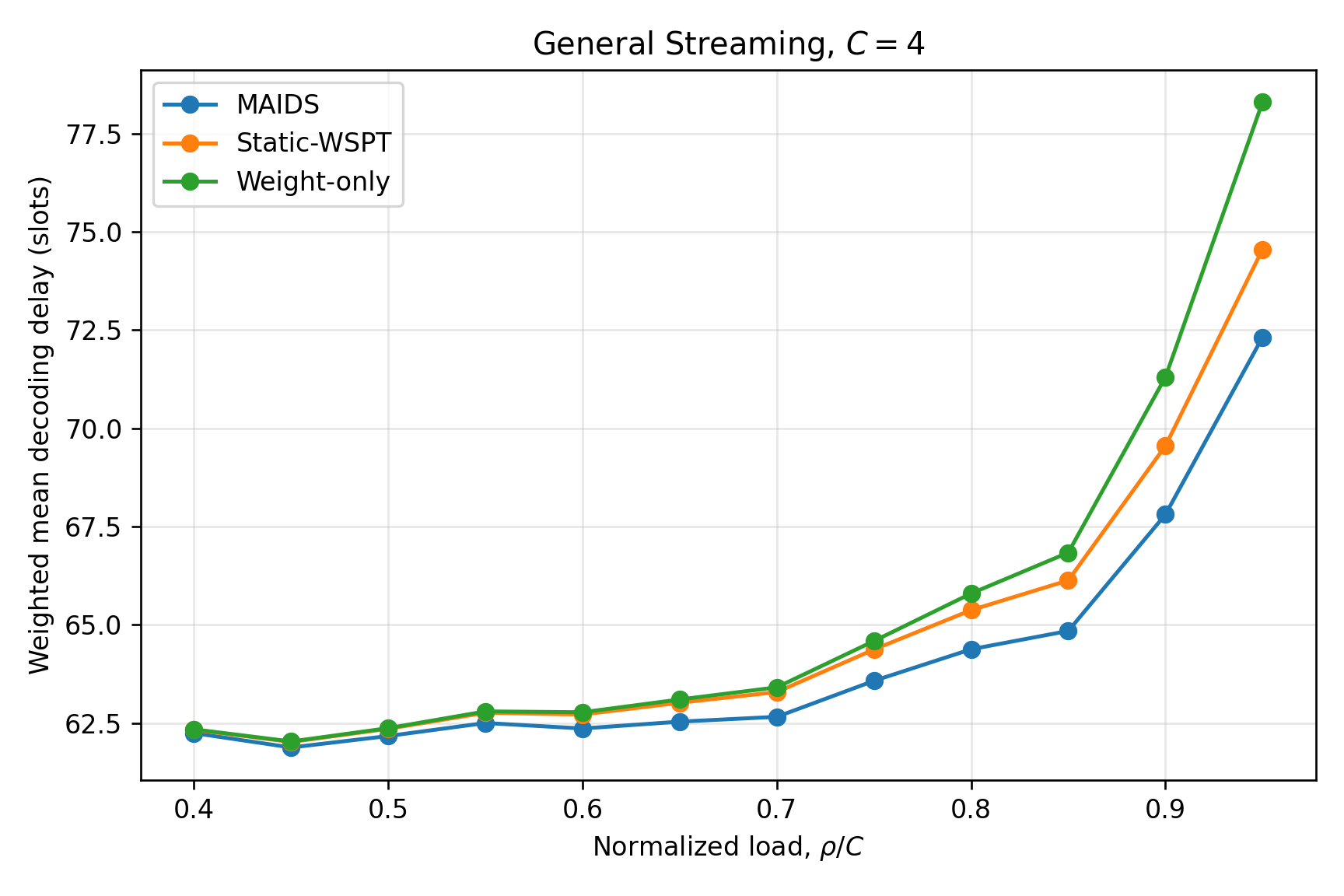}
    \label{fig:streaming_c4}
}
\caption{General streaming performance under exogenous coded-packet
arrivals.}
\label{fig:streaming_results}
\end{figure}

Fig.~\ref{fig:streaming_results} shows the weighted mean decoding delay
for $C=1,2,$ and $4$. In all three cases, delay increases with
$\rho/C$ as more messages overlap and compete for the finite
processing budget. MAIDS consistently yields the smallest delay among
the three online policies. Because all policies replay the same trace
within each trial, paired 95\% confidence intervals are computed for the
baseline-minus-MAIDS delay
differences over the 96 trials at each setting. The intervals are
strictly positive at all 36 tested $(C,\rho/C)$ settings for both
baselines. The smallest separation occurs at $C=4$ and $\rho/C=0.40$:
the mean Static-WSPT--MAIDS difference is $0.0853$ slots with a 95\%
confidence interval of $[0.0756,0.0951]$, while the corresponding
Weight-only--MAIDS difference is $0.0961$ slots with a 95\% confidence
interval of $[0.0853,0.1069]$. At light load, the
policies are relatively close because receiver contention is limited.
The separation becomes more pronounced as $\rho/C$ approaches one,
where completion order has a larger effect on queueing and decoding
delay.

The behavior across $C$ also illustrates the role of receiver
capacity. For $C=4$, the three policies remain close over a wider
low-load region because multiple buffered packets can be processed in
the same slot. Once the load becomes high, however, the curves separate
again and MAIDS retains the lowest delay. These results indicate that
the dynamic residual-work term in $w_m/d_m(t)$ is most useful when
several heterogeneous messages are simultaneously serviceable and the
receiver must decide which nearly completed messages should receive
scarce processing capacity first.

\subsection{NP-Hard Batch-Release Benchmark}

The NP-hard benchmark evaluates MAIDS on finite batch-release instances
derived from the construction in Section~\ref{sec:complexity}. Each
instance contains $M=8$ messages with
$K_m\sim U\{2,\ldots,10\}$ and weights drawn from
$\{1,2,4,8,16\}$. For a realized total workload
\begin{equation}
    W=\sum_m K_m
\end{equation}
and finite workload-control parameter $x\equiv\rho/C$, the reference
horizon is defined as
\begin{equation}
    H_x
    =
    \left\lceil
        \frac{W}{Cx}
    \right\rceil.
    \label{eq:sim_batch_horizon}
\end{equation}
Here $x\in\{0.40,0.45,\ldots,0.95\}$ controls release-window
compression for this finite instance; unlike streaming $\rho/C$, it is
not interpreted as a steady-state offered load. All $K_m$ packets of
message $m$ become available together at a random
destination-side release time
\begin{equation}
    \tau_m^{\rm rel}
    \sim
    U\left\{
        1,\ldots,
        \left\lceil\beta H_x\right\rceil
    \right\},
    \qquad
    \beta=0.35.
    \label{eq:sim_batch_release}
\end{equation}
The same $(K_m,w_m)$ realization is reused across $x$ values within a
Monte Carlo trial, while changing $x$ changes the release window and
therefore the amount of scheduling overlap.

For every instance, MAIDS, WSRPT, and the exact offline optimum
obtained from the mixed-integer formulation are evaluated. For $C=1$, the
batch-availability model makes $d_m(t)$ exactly equal to the remaining
processing time, so MAIDS is classical WSRPT as established in
Section~\ref{sec:maids_relations}. For $C>1$, the plotted WSRPT curve
denotes the natural capacity-$C$ extension that applies the same
weighted-shortest-remaining-work ratio and allocates up to $C$ units
per slot. Consequently, in the batch experiments it uses exactly the
same per-slot ordering as MAIDS. Numerically, the maximum absolute
difference between the MAIDS and WSRPT objectives over all evaluated
batch instances is zero. Accordingly, the WSRPT curve is completely
superimposed on the MAIDS curve in all three panels of
Fig.~\ref{fig:nphard_results}.

Fig.~\ref{fig:nphard_results} compares the resulting delays with Exact
OPT. For $C=1$, the three curves are nearly indistinguishable, although
WSRPT/MAIDS is not globally optimal when release times differ. The
difference remains small on average for $C=2$, while a more visible
separation from the offline optimum appears for $C=4$. The exact
benchmark exploits complete future knowledge of all message release
times, whereas MAIDS remains an online local policy.

\begin{figure}[!t]
\centering
\subfigure[$C=1$]{
    \includegraphics[width=0.96\columnwidth]{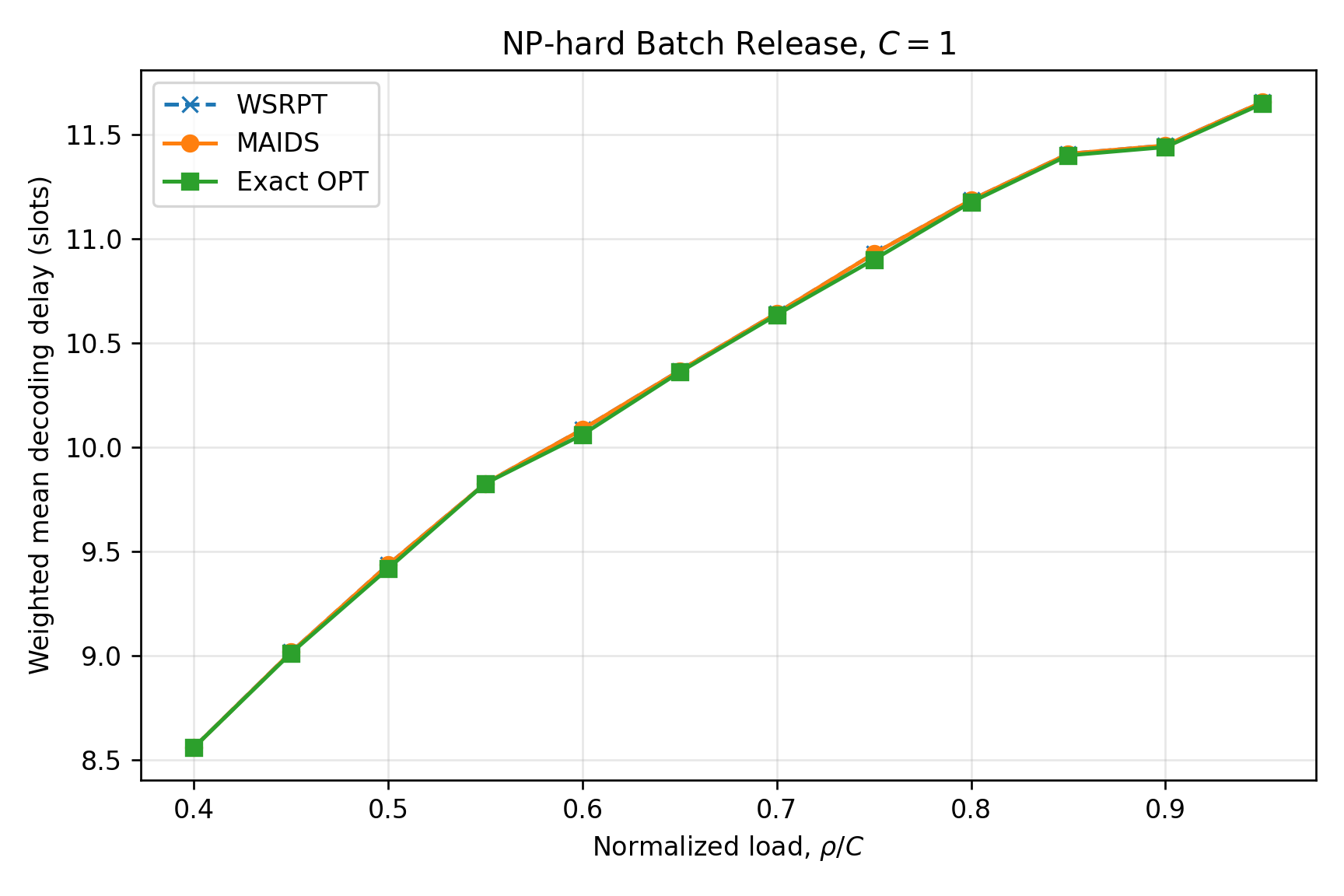}
    \label{fig:nphard_c1}
}\\[-0.5ex]
\subfigure[$C=2$]{
    \includegraphics[width=0.96\columnwidth]{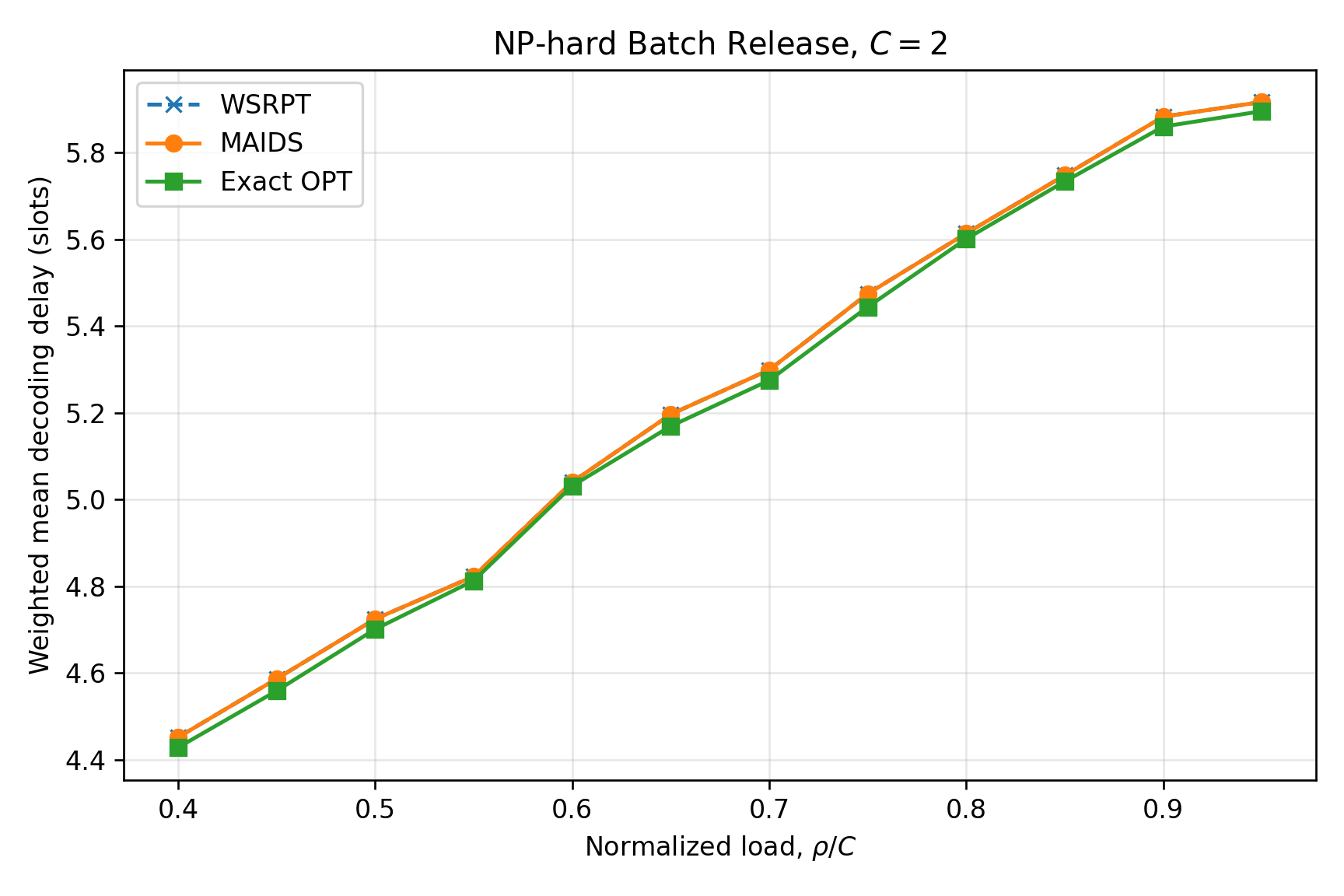}
    \label{fig:nphard_c2}
}\\[-0.5ex]
\subfigure[$C=4$]{
    \includegraphics[width=0.96\columnwidth]{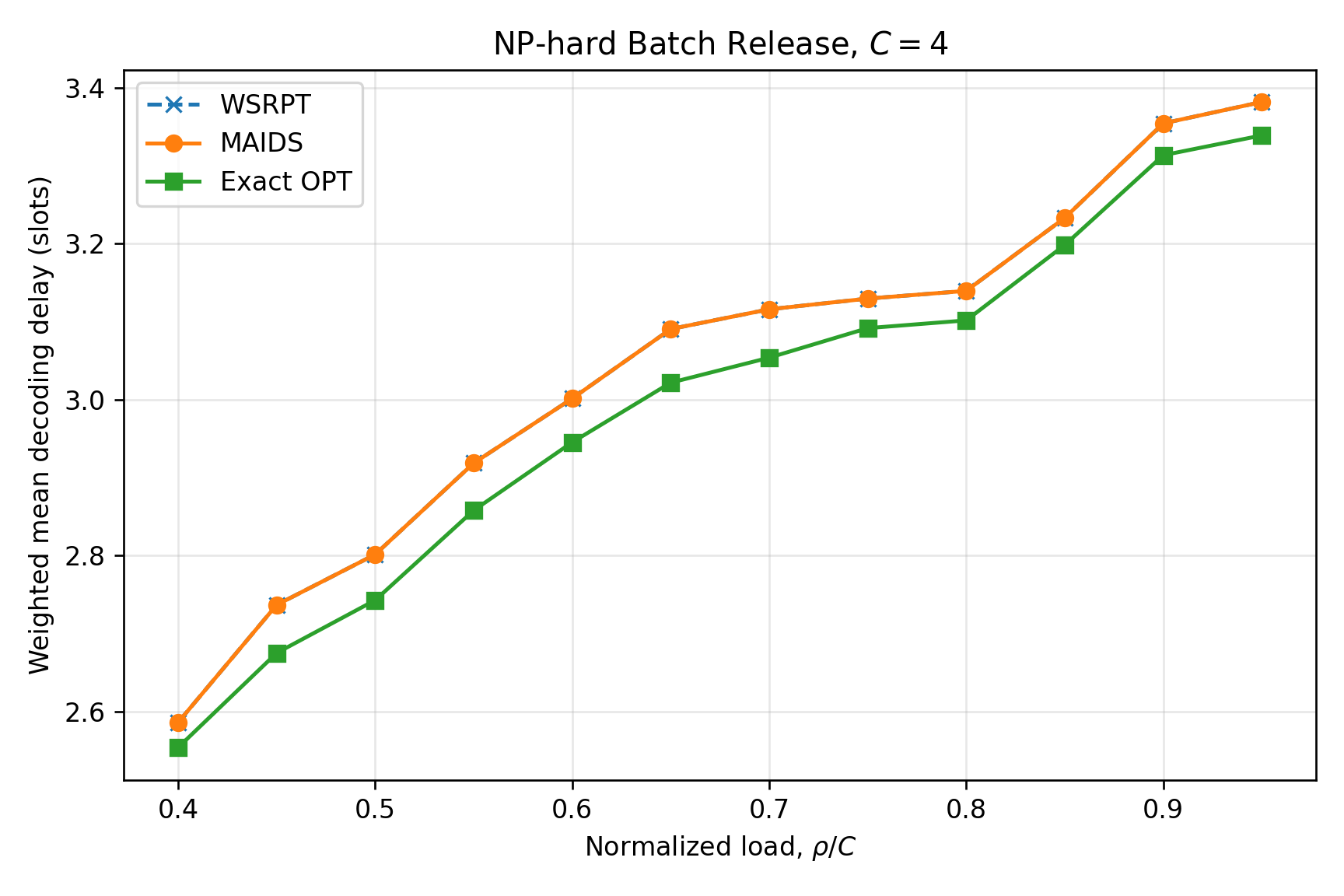}
    \label{fig:nphard_c4}
}
\caption{NP-hard batch-release benchmark.}
\label{fig:nphard_results}
\end{figure}

To quantify differences that are difficult to see in the absolute-delay
plots, define the paired-instance relative optimality gap
\begin{equation}
    g
    =
    \frac{
        J_{\rm MAIDS}-J_{\rm OPT}
    }{
        J_{\rm OPT}
    }
    \times 100\%.
    \label{eq:sim_relative_gap}
\end{equation}
Because smaller delay is better, $g=0$ indicates an optimal MAIDS
schedule and positive $g$ is the percentage by which its weighted delay
exceeds the offline optimum. Table~\ref{tab:nphard_gap} aggregates the
gap over all 12 $x$ values and 16 random instances per value. The mean
gap is only $0.10\%$ for $C=1$ and $0.41\%$ for $C=2$. For $C=4$ it
increases to $1.86\%$, showing that MAIDS remains close to the exact
benchmark on average but should not be interpreted as globally
optimal. The maximum observed gaps also increase with $C$, reaching
$16.36\%$ for the most difficult evaluated $C=4$ instance.
All 576 MILP instances are solved successfully within the 30-s time
limit, so every value labeled Exact OPT in this benchmark corresponds
to a solved offline optimum. The last column of
Table~\ref{tab:nphard_gap} reports the fraction of instances on which
MAIDS itself attains that optimum.

\begin{table}[!t]
\centering
\caption{Relative Optimality Gap of MAIDS on NP-Hard Batch Instances}
\label{tab:nphard_gap}
\scriptsize
\begin{tabular}{@{}cccc@{}}
\toprule
$C$ &
Mean gap (\%) &
Maximum gap (\%) &
MAIDS-optimal instances (\%) \\
\midrule
1 & 0.098 & 3.723 & 91.67 \\
2 & 0.409 & 5.861 & 77.60 \\
4 & 1.862 & 16.364 & 50.52 \\
\bottomrule
\end{tabular}
\end{table}

\subsection{Performance Boundary Validation}

The two positive exact boundaries in
Section~\ref{sec:maids_boundaries} are validated as implementation-level
consistency checks rather than new heuristic comparisons.

\subsubsection{Equal-weight nonblocking arrivals}
Under \eqref{eq:sim_equal_weight_arrival}, arrivals are progressive but
nonblocking. Across 500 independent instances, MAIDS and SRPT have zero
objective difference in every instance, and their complete message
completion-time vectors are identical. This matches
Theorem~\ref{thm:maids_equal_weight}.

\subsubsection{Common full availability}
For the common-full endpoint of
Corollary~\ref{cor:maids_common_activation}, 480 heterogeneous instances
are evaluated with all required packets available at slot~1. The
maximum absolute difference between MAIDS and the Smith/WSPT optimum is
$0$, recovering the predicted full-availability boundary.

Thus the simulations validate the equal-weight progressive boundary and
the full-buffer endpoint of the common-activation boundary. The
negative result in Proposition~\ref{prop:no_constant_online} is a
worst-case impossibility theorem and does not require Monte Carlo
validation.

\section{Conclusion}

This paper studies weighted decoding-delay scheduling for continuous
RLNC-coded multi-source traffic with exogenous packet arrivals and finite
processing capacity. The trace-conditioned offline problem is strongly
NP-hard in a batch-release subclass, while MAIDS provides a tractable
online rule based on message weight and remaining decoding deficit. Exact
optimality holds in two structured single-unit regimes, whereas the
unrestricted weighted online setting admits no universal deterministic
$O(1)$ competitive guarantee.

Simulation results confirm the predicted exact performance boundaries and
show that MAIDS performs favorably against the tested online baselines.
Across the finite NP-hard benchmark, MAIDS also remains close to the exact
offline optimum on average, supporting its use as a practical scheduler
for heterogeneous coded-message streams.

\bibliographystyle{IEEEtran}
\bibliography{Ref}

\end{document}